\tikzstyle{tre}=[circle,draw,minimum size=3.5mm]
\newcommand{\etq}[1]{%
\draw (#1) node {\footnotesize $#1$};
}
\renewcommand{\leq}{\leqslant}
\renewcommand{\geq}{\geqslant}
\newcommand{\NN}{\mathbb{N}}
\newcommand{\qed}{}
\newtheorem{theorem}{Theorem}
\newtheorem{proposition}{Proposition}
\begin{document}

\title{The comparison of tree-sibling time consistent phylogenetic networks is graph isomorphism-complete}

\author{Gabriel Cardona$^1$, Merc\`e Llabr\'es $^1$, Francesc  Rossell\'o $^1$, and Gabriel Valiente $^2$
\\ {\small $^1$Department of Mathematics and Computer Science, University of the Balearic Islands}
\\ {\small 07122 Palma de Mallorca, Spain} 
\\  {\small $^1$Algorithms, Bioinformatics, Complexity and Formal Methods Research
Group}
\\{\small Technical University of Catalonia, 08034 Barcelona, Spain}}

\maketitle

\begin{abstract}
In \cite{cardona.ea:sbtstc:2008} we gave  a metric on the class of semibinary tree-sibling time consistent phylogenetic networks that is computable in polynomial time; in particular, the problem of deciding if two networks of this kind are isomorphic is in P. In this paper, we show that if we remove the semibinarity condition above, then the problem becomes much harder. More precisely, we proof that the isomorphism problem for generic tree-sibling time consistent phylogenetic networks is polynomially equivalent to the graph isomorphism problem. Since the latter is believed to be neither in P nor NP-complete, the chances are that it is impossible to define a metric on the class of all tree-sibling time consistent phylogenetic networks that can be computed in polynomial time.
\end{abstract}

\section{Introduction}	
\label{sec:intro}

After the realization that reticulation processes, like hybridizations, recombinations or lateral gene transfers, have been more relevant  in the evolution of life on Earth than previously thought \cite{doolittle:99}, there has been a growing interest in the development of algorithms for the reconstruction of \emph{phylogenetic networks}: graphical models of evolutionary histories that go beyond phylogenetic trees by including \emph{hybrid nodes} of in-degree greater than one representing reticulation events. As the number of  available such algorithms   increases, the need of methods for the comparison of phylogenetic networks also increases, as they are used, for instance, to assess the reliability and robustness of these algorithms \cite{moret.ea:2004,nakhleh.ea:03psb}.

One of the types of phylogenetic networks for which there exist reconstruction methods  \cite{nakhleh.ea:bioinfo06,nakhleh.ea:bioinfo07}  are 
the \emph{tree-sibling time consistent} networks, \emph{TSTC networks}, for short (see \S \ref{sec:prel} for a formal definition). There have been several attempts to define a metric on the class of all TSTC networks on a given set of taxa \cite{nakhleh:phd04}, and  we have recently given a metric on the class of all \emph{semibinary}  TSTC networks, where all hybrid nodes have in-degree two \cite{cardona.ea:sbtstc:2008}, but none of the metrics for phylogenetic networks computable in polynomial time proposed so far satisfies the separation axiom (distance 0 means isomorphism) for generic TSTC networks: see \cite{cardona.ea:tcbb:comparison.1:2008,cardona.ea:tcbb:comparison.2:2008}. In this paper we show why it should come as no surprise: such a metric would solve in polynomial time the graph isomorphism problem.

The graph isomorphism problem is one of the most important decision problems for which the computational complexity is not known yet \cite{goldberg03,KST93}. It is believed to be neither in P nor NP-complete, and subexponential time solutions for it are known. A problem is said to be \emph{graph isomorphism-complete} when it is polynomially equivalent to the graph isomorphism problem. In this paper we show that, for every set $S$ with more than two elements, the isomorphism problem for TSTC phylogenetic networks with taxa bijectively labeled in $S$ is graph isomorphism-complete.

\section{Preliminaries}

\label{sec:prel}

Let $G=(V,E)$ be a non-empty rooted directed acyclic graph  (a \emph{rDAG}, for short). A node of $G$  is a
\emph{leaf} if it has out-degree $0$,  \emph{internal} if its out-degree is $\geq 1$, of \emph{tree} type if its in-degree is $\leq
1$, of \emph{hybrid}  type if its in-degree is $>
1$, and  \emph{elementary} if it is a tree node of out-degree 1. A node $v$ is a
\emph{child} of another node $u$ (and, hence, $u$ is a \emph{parent} of $v$) if
$(u,v)\in E$. Two nodes $u$ and $v$ are \emph{siblings} of each
other if they share a parent. An arc $(u,v)$ in a rDAG is a \emph{tree arc} when $v$ is a tree node, and a \emph{hybridization arc} when $v$ is a hybrid node. The \emph{height} of a node $v$ is the longest length of a directed path from $v$ to a leaf, and the \emph{depth} of $v$ is the longest length of a directed path from the root to $v$.

Given a finite set $S$ of \emph{labels}, a \emph{$S$-rDAG}  is a rDAG
with  its leaves injectively labelled by
$S$. By an isomorphism of $S$-rDAGs we understand an
isomorphism of directed graphs that  preserves the labelling, that is, that maps each leaf in one network to the leaf with the same label in the other network (in particular, isomorphic $S$-rDAGs must have the same sets of actual leaf labels). In a $S$-rDAG, we shall always identify without any further reference every leaf with its label.

A \emph{phylogenetic network} on a set $S$ of \emph{taxa} is a $S$-rDAG such that:
\begin{itemize}
\item No tree node is elementary.
\item Every hybrid node has out-degree   $1$, and its single  child is a tree node.
\end{itemize}
We will say that a phylogenetic network is \emph{tree-sibling} if every
hybrid node has at least one sibling that is a tree node.

A \emph{temporal assignment} \cite{baroni.ea:sb06} on a network $N=(V,E)$ is mapping
$\tau:V\to\NN$ such that:
\begin{enumerate}[(a)]
\item If $v$ is a hybrid node and $(u,v)\in E$, then $\tau(u)=\tau(v)$.
\item If $v$ is a tree node and $(u,v)\in E$, then $\tau(u)<\tau(v)$.
\end{enumerate}
We will say that a phylogenetic network is \emph{time-consistent} if it admits a
temporal assignment. The following
alternative characterization of time consistency will be used later.
For a proof, see \cite{baroni.ea:sb06,cardona.ea:07a}.

\begin{proposition}
 \label{prop:tc}
Let $N=(V,E)$ be a phylogenetic network, let $E_{H}$ be its set of
hybridization arcs, and let $N^{*}=(V,E^{*})$ be the directed graph  with the same
set $V$ of nodes as $N$ and set of arcs $E^{*}=E\cup \{(v,u)\mid (u,v)\in E_H\}$.
Then, $N$ is time consistent if, and only if, $N^{*}$ does not have
any cycle containing some tree arc of $N$.  \qed
\end{proposition}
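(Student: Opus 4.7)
The plan is to prove each direction separately, with the forward implication being a quick contradiction and the backward implication requiring us to construct a temporal assignment via a quotient-and-topological-sort argument.

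For the forward direction, I would suppose $\tau$ is a temporal assignment on $N$ and that $N^*$ contains a cycle $v_0\to v_1\to\cdots\to v_k=v_0$ passing through at least one tree arc of $N$. Each arc $(v_i,v_{i+1})$ of $N^*$ is of exactly one of three types: a tree arc of $N$, in which case $\tau(v_i)<\tau(v_{i+1})$ by condition (b); a hybridization arc of $N$, in which case $\tau(v_i)=\tau(v_{i+1})$ by condition (a); or the reverse $(v_i,v_{i+1})$ of a hybridization arc $(v_{i+1},v_i)\in E_H$, in which case again $\tau(v_i)=\tau(v_{i+1})$ by (a). Thus around the cycle the sequence $\tau(v_0),\tau(v_1),\ldots,\tau(v_k)$ is non-decreasing and strictly increases at the assumed tree arc, so $\tau(v_0)<\tau(v_k)=\tau(v_0)$, a contradiction.

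For the backward direction, I would build $\tau$ by first quotienting $V$ by the equivalence relation $\sim$ generated by hybridization arcs: $u\sim v$ iff there is an undirected path between them using only arcs of $E_H$. Let $\overline V=V/\!\!\sim$ and define a digraph $\overline N$ on $\overline V$ by placing an arc $[u]\to[v]$ (with $[u]\neq[v]$) whenever there exist $u'\in[u]$, $v'\in[v]$ with $(u',v')$ a tree arc of $N$. The crucial claim is that $\overline N$ is acyclic: a directed cycle $[w_0]\to[w_1]\to\cdots\to[w_k]=[w_0]$ in $\overline N$ lifts to a closed walk in $N^*$ by concatenating, between consecutive tree arcs $(a_i,b_i)$ realizing $[w_i]\to[w_{i+1}]$, an undirected path within the class $[w_{i+1}]$ (which exists by definition of $\sim$) made of hybridization arcs of $N$ and their reverses — all of which are arcs of $N^*$. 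This closed walk contains the tree arcs $(a_i,b_i)$ of $N$ and hence a simple cycle through some tree arc, contradicting the hypothesis. Given that $\overline N$ is acyclic I would fix a topological ordering and take $\tau(v)$ to be the rank of $[v]$ in that ordering; condition (a) holds because both endpoints of a hybridization arc lie in the same class, and (b) holds because a tree arc $(u,v)$ either projects to an arc $[u]\to[v]$ in $\overline N$ (giving $\tau(u)<\tau(v)$) or satisfies $[u]=[v]$ — but the latter would yield a cycle in $N^*$ through the tree arc $(u,v)$, using an undirected hybridization path from $v$ back to $u$, again contradicting the hypothesis.

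The main obstacle is the acyclicity of the quotient $\overline N$; once that is established, the temporal assignment is essentially read off from any topological sort. Care is needed in the lifting argument to ensure the closed walk we produce in $N^*$ actually contains a simple cycle using a tree arc, rather than decomposing into cycles all of which avoid tree arcs — but since each lifted segment corresponding to $[w_i]\to[w_{i+1}]$ contributes a tree arc between distinct classes, any simple cycle decomposition of the closed walk must use at least one such tree arc.
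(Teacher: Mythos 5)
Your proof is correct. Note that the paper itself does not prove Proposition~\ref{prop:tc}; it only cites \cite{baroni.ea:sb06,cardona.ea:07a}, where the argument given is essentially the one you propose: contract the components spanned by hybridization arcs and topologically sort the resulting acyclic quotient. Your forward direction is the standard monotonicity-around-a-cycle contradiction, and your backward direction correctly isolates the two points that need care --- the acyclicity of the quotient and the fact that a closed walk in $N^{*}$ containing a tree arc yields a (simple) cycle of $N^{*}$ containing a tree arc, which follows from the decomposition of a closed directed walk into arc-disjoint simple cycles. Nothing is missing.
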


The underlying biological motivation for the definitions on phylogenetic networks introduced so far is the following. In a phylogenetic network,  
tree nodes model species (either extant, the leaves, or non-extant, the
internal tree nodes), while hybrid nodes model reticulation events,
where different species interact to create new species, the parents of
the hybrid node being the species involved in this event and its
single child being the resulting species. The tree children of a tree node represent direct descendants through mutation. The first condition in the definition of phylogenetic network says that
every non-extant species is assumed to have at least two different direct descendants, be them by mutation or through some reticulation event. This is a very common restriction in any definition of phylogeny (be it a tree or a network), since  species with only one child cannot be reconstructed from biological data. 

The tree-sibling condition says then that, for every reticulation event, at least one of the 
species involved in it must have some descendant through mutation. This condition was introduced with the name \emph{class I} in L. Nakhleh's PhD Thesis \cite{nakhleh:phd04}, and it has reappeared in several phylogenetic network reconstruction methods  \cite{nakhleh.ea:bioinfo06,nakhleh.ea:bioinfo07}.
As far as the time consistency goes, we understand that
the time assigned to
a node represents the time when the corresponding species existed, or
when the reticulation event took place. The first condition  in time consistency
 means then that the species involved in a reticulation event
must coexist in time in order to interact, while the second condition means that  speciation takes some
amount of time to take place.

\section{Main Results}

It is well known \cite{goldberg03,ZKT85} that the isomorphism problem for rDAGs is graph isomorphism-complete.
It turns out that the isomorphism problem for rDAGs with their leaves bijectively labeled in any given set of labels is also graph isomorphism-complete: since we have not been able to find a proof of this easy result in the literature, we provide one here.

\begin{proposition}
For  every non-empty set $S$ of labels, the isomorphism for $S$-rDAGs is graph isomorphism-complete.
\end{proposition}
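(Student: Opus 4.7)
The plan is to exhibit polynomial-time reductions in both directions between $S$-rDAG isomorphism and plain rDAG isomorphism; since \cite{goldberg03,ZKT85} record that the latter is graph isomorphism-complete, the former will be so as well.

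For the reduction from $S$-rDAG isomorphism to rDAG isomorphism, I would enumerate $S=\{s_1,\ldots,s_n\}$ and, given an $S$-rDAG $G$, build an unlabelled rDAG $\widehat{G}$ by attaching below the leaf labelled $s_i$ (whenever that label is actually used in $G$) a distinctive ``fingerprint'' gadget $H_i$. A safe concrete choice is to let $H_i$ consist of $N+i$ new pendant children of the original leaf, where $N$ is fixed strictly larger than the maximum out-degree occurring in either input network. In the resulting $\widehat{G}$, the original leaves of $G$ are characterized structurally as the nodes whose out-degree exceeds $N$ and all of whose children have out-degree zero, and the label of such a node is recovered as its out-degree minus $N$. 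Consequently any rDAG isomorphism $\widehat{G_1}\to\widehat{G_2}$ must map a node carrying label $s_i$ to a node carrying label $s_i$, so it restricts to a label-preserving isomorphism $G_1\to G_2$; the converse extension is immediate.

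For the reduction from rDAG isomorphism to $S$-rDAG isomorphism, I would pick any $s_0\in S$ (possible since $S$ is non-empty). Given an rDAG $G$ with leaves $\ell_1,\ldots,\ell_k$, form an $S$-rDAG $\widetilde{G}$ by adjoining a single new node $v^*$ together with the arcs $(\ell_j,v^*)$ for every $j$, and labelling $v^*$ with $s_0$. Then $\widetilde{G}$ is acyclic, its unique leaf is $v^*$, and it is a valid $S$-rDAG. Any $S$-rDAG isomorphism $\widetilde{G_1}\to\widetilde{G_2}$ must send $v^*$ to $v^*$ (the unique leaf, and the only labelled node) and therefore restricts to an rDAG isomorphism of the original vertex sets; the reverse extension is trivial. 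Both constructions add only a linear number of new vertices and edges, so they run in polynomial time.

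The main subtlety I anticipate is in the forward reduction: one must verify that no rDAG isomorphism of $\widehat{G_1}$ and $\widehat{G_2}$ can interchange a gadget with a fragment of the original network, or with a gadget carrying a different label. The structural signature ``out-degree greater than $N$, all children of out-degree zero'' handles this cleanly, but the choice of gadget must be rigid enough to support such an invariant; this is the one place where a careless construction could derail the argument.
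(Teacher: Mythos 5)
Your proposal is correct and follows essentially the same strategy as the paper: encode each leaf label as a pendant gadget whose size determines the label for the reduction to unlabelled rDAG isomorphism, and funnel all leaves into a single labelled sink for the converse reduction. The only differences are cosmetic --- the paper attaches $k$ pendant children to the leaf labelled $k$ and identifies the original leaves by height $1$ rather than by your out-degree threshold $N$, and in the reverse direction it inserts one extra intermediate node $a$ above its single labelled leaf.
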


\begin{proof}
 Without any loss of generality, we assume that $S=\{1,\ldots,n\}\subseteq \NN$.
 
Let us prove first that the isomorphism of $S$-rDAGs reduces to the isomorphism of rDAGs. For every $S$-rDAG $G$, let $G'$ be the rDAG obtained from $G$ by unlabelling its leaves and then, for each $k=1,\ldots,n$, if $G$ contained a leaf labeled with $k$, then adding to this leaf $k$ tree-children leaves; see Fig.~\ref{fig:red1}. The construction of $G'$ from $G=(V,E)$ adds $O(n^2)\leq O(|V|^2)$ nodes and arcs, and therefore it is polynomial in the size of $G$. And $G$ can be reconstructed from $G'$ by simply replacing, for each $k=1,\ldots,n$,  the node of height 1 with $k$ leaves by a leaf labeled with $k$.
Then, it is straightforward to check that, for every pair of $S$-rDAGs $G_1$ and $G_2$ over $S$, 
$G_1\cong G_2$ as $S$-rDAGs if, and only if, $G_1'\cong G_2'$ as rDAGs.

Let us prove now that  the isomorphism of rDAGs reduces to the isomorphism of $S$-rDAGs. For every rDAG $G$, let $G''$ be the $S$-rDAG obtained from $G$ by adding a new node $a$, arcs from each leaf of $G$ to $a$ and finally adding one  child  leaf  to $a$  labeled $1$; see Fig.~\ref{fig:red2}. The construction of $G''$ from $G=(V,E)$ adds 2 nodes and $O(|V|)$ arcs, and therefore it is polynomial.  And $G$ can be reconstructed from $G''$ by simply removing its leaves and its only height 1 node, $a$, as well as all arcs pointing to $a$ or to the leaves. It is straightforward to check that, for every pair of  rDAGs $G_1$ and $G_2$ over $S$, 
$G_1\cong G_2$ if, and only if, $G_1''\cong G_2''$ as $S$-rDAGs. \qed
\end{proof}

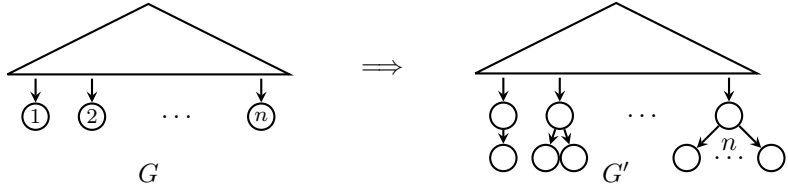
\begin{figure}[htb]
\begin{center}
\begin{tikzpicture}[thick,>=stealth,scale=0.375]
\draw(0,0) node[tre] (1) {}; \etq 1
\draw(2,0) node[tre] (2) {}; \etq 2
\draw(5,0) node  {$\ldots$};  
\draw(8,0) node[tre] (n)  {};  \etq n
\draw (-1,1.5)--(9,1.5)--(4,4)--(-1,1.5);
\draw(0,1.7) node (1u) {};
\draw(2,1.7) node (2u) {};
\draw(8,1.7) node (nu) {};
\draw[->](1u)--(1);
\draw[->](2u)--(2);
\draw[->](nu)--(n);
\draw(4,-2) node {$G$};
\end{tikzpicture}
\qquad
\begin{tikzpicture}[thick,>=stealth,scale=0.375]
\draw(0,-2) node { };
\draw(0,4) node { };
\draw(0,2) node {$\Longrightarrow$};
\end{tikzpicture}
\qquad
\begin{tikzpicture}[thick,>=stealth,scale=0.375]
\draw(0,0) node[tre] (1) {}; 
\draw(2,0) node[tre] (2) {}; 
\draw(5,0) node  {\ldots};  
\draw(8,0) node[tre] (n)  {}; 
\draw (-1,1.5)--(9,1.5)--(4,4)--(-1,1.5);
\draw(0,1.7) node (1u) {};
\draw(2,1.7) node (2u) {};
\draw(8,1.7) node (nu) {};
\draw(0,-1.5) node[tre] (1a) {}; 
\draw(1.5,-1.5) node[tre] (2a) {}; 
\draw(2.5,-1.5) node[tre] (2b) {}; 
\draw(6.5,-1.5) node[tre] (na) {}; 
\draw(9.5,-1.5) node[tre] (nb) {}; 
\draw(8,-1.5) node  {$\ldots$}; 
\draw(8,-1) node  {$n$}; 
\draw[->](1u)--(1);
\draw[->](2u)--(2);
\draw[->](nu)--(n);
\draw[->](1)--(1a);
\draw[->](2)--(2a);
\draw[->](2)--(2b);
\draw[->](n)--(na);
\draw[->](n)--(nb);
\draw(4,-2) node {$G'$};
\end{tikzpicture}
\end{center}
\caption{\label{fig:red1}
The construction involved in the reduction of  the isomorphism of $S$-rDAGs to the isomorphism of rDAGs.}
\end{figure}

\begin{figure}[htb]
\begin{center}
\begin{tikzpicture}[thick,>=stealth,scale=0.375]
\draw(0,0) node[tre] (1) {};  
\draw(2,0) node[tre] (2) {};  
\draw(5,0) node  {$\ldots$};  
\draw(8,0) node[tre] (n)  {}; 
\draw (-1,1.5)--(9,1.5)--(4,4)--(-1,1.5);
\draw(0,1.7) node (1u) {};
\draw(2,1.7) node (2u) {};
\draw(8,1.7) node (nu) {};
\draw[->](1u)--(1);
\draw[->](2u)--(2);
\draw[->](nu)--(n);
\draw(4,-3.5) node {$G$};
\end{tikzpicture}
\qquad
\begin{tikzpicture}[thick,>=stealth,scale=0.375]
\draw(0,-3) node { };
\draw(0,4) node { };
\draw(0,2) node {$\Longrightarrow$};
\end{tikzpicture}
\qquad
\begin{tikzpicture}[thick,>=stealth,scale=0.375]
\draw(0,0) node[tre] (1) {}; 
\draw(2,0) node[tre] (2) {}; 
\draw(5,0) node  {\ldots};  
\draw(8,0) node[tre] (n)  {}; 
\draw (-1,1.5)--(9,1.5)--(4,4)--(-1,1.5);
\draw(0,1.7) node (1u) {};
\draw(2,1.7) node (2u) {};
\draw(8,1.7) node (nu) {};
\draw(4,-1.5) node[tre] (H) {}; 
\draw(4,-3) node[tre] (1a) {}; 
\draw (1a) node {\footnotesize $1$};
\draw[->](1u)--(1);
\draw[->](2u)--(2);
\draw[->](nu)--(n);
\draw[->](1)--(H);
\draw[->](2)--(H);
\draw[->](n)--(H);
\draw[->](H)--(1a);
\draw(2,-3.5) node {$G''$};

\end{tikzpicture}

\end{center}
\caption{\label{fig:red2}
The construction involved in the reduction of  the isomorphism of rDAGs to the isomorphism of $S$-rDAGs.}
\end{figure}
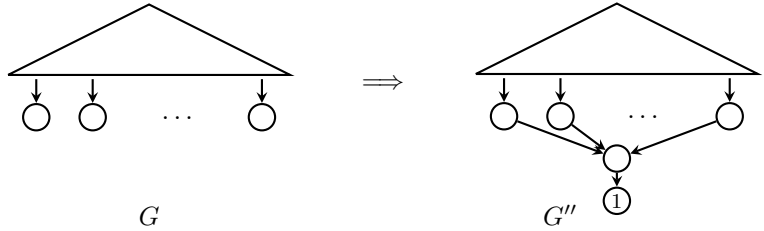

\begin{theorem}
For every set $S$ with $|S|\geq 3$, the isomorphism of TSTC-networks on a set $S$ of taxa is graph isomorphism-complete.
\end{theorem}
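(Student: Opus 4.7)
The plan is to prove the theorem by establishing polynomial-time reductions in both directions between the isomorphism of $S$-TSTC networks and that of $S$-rDAGs, the latter being graph isomorphism-complete by the preceding proposition. The easy direction, $S$-TSTC isomorphism $\leq_p$ $S$-rDAG isomorphism, is immediate: every TSTC network on $S$ is, by definition, a $S$-rDAG, and an isomorphism of TSTC networks is nothing but an isomorphism of the underlying $S$-rDAGs, so the identity map is a reduction.

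For the converse direction, the plan is to construct, in polynomial time, a transformation $G\mapsto \widehat G$ from arbitrary $\{a\}$-rDAGs (for some fixed $a\in S$; also graph isomorphism-complete by the preceding proposition) to TSTC networks on $S$, such that $G_1\cong G_2$ iff $\widehat{G_1}\cong\widehat{G_2}$. Reducing from $\{a\}$-rDAGs leaves the $|S|-1\geq 2$ labels of $S\setminus\{a\}$ free for decorating the gadgetry of $\widehat G$. The construction must simultaneously enforce the four conditions making a $S$-rDAG into a TSTC network: absence of elementary tree nodes, out-degree one with a tree child at every hybrid, the tree-sibling property, and time consistency.

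My plan for $\widehat G$ is a three-step local gadgetry. First, for every hybrid node $h$ of $G$, merge all of $h$'s out-arcs into a single new intermediate tree node $t_h$ (creating one freshly if $h$ was a sink in $G$), so that every hybrid acquires out-degree one with a tree child. Second, for time consistency, I compute the depth levelling $\ell(v)=$ length of the longest root-to-$v$ path, and for each hybrid $h$ with parents $p_1,\dots,p_k$ I subdivide each in-arc $(p_i,h)$ by inserting $\max_j \ell(p_j)-\ell(p_i)$ new tree nodes at strictly increasing depths, so that all parents of every hybrid end up at a common depth; time consistency then follows by Proposition~\ref{prop:tc} via the temporal assignment sending each tree node to its depth and each hybrid to the common depth of its parents. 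Third, I attach to every tree node which has remained or become elementary one extra tree-child arc, all these extra arcs being routed through shared hybrid hubs into a single small rigid ``marker'' subtree built from two labels of $S\setminus\{a\}$; the tree-sibling condition then holds automatically, since each hybrid has, by the first step, the intermediate tree node $t_h$ as sibling.

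The main obstacle is the label economy of the third step. Since $S$ has only two free labels while the number of tree nodes requiring an extra child grows linearly with $|G|$, one cannot afford a fresh labelled leaf per node; all the extra tree-child arcs must instead funnel through shared hybrid hubs into one common marker subtree. Designing this marker so that (i) it is rigid, i.e.\ admits no non-trivial automorphism, (ii) it meshes with the rest of $\widehat G$ without breaking any TSTC condition, and (iii) it makes the embedded copy of $G$ inside $\widehat G$ structurally identifiable (for example as the subgraph that does not lie on any short directed path into the marker), is the delicate bookkeeping step of the proof; the hypothesis $|S|\geq 3$ is precisely what supplies the two reserved labels needed for the marker. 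Once this is in place, isomorphisms between $\widehat{G_1}$ and $\widehat{G_2}$ restrict canonically to isomorphisms between $G_1$ and $G_2$, and conversely isomorphisms of $G$'s extend canonically to isomorphisms of $\widehat G$'s, yielding the desired polynomial equivalence.
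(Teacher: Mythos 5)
Your overall strategy (a two-way reduction, with the hard direction a polynomial gadget construction turning an rDAG into a TSTC network using the two spare labels guaranteed by $|S|\geq 3$) matches the paper's, and some of your gadgets coincide with theirs (the tree node $t_h$ inserted below each hybrid, extra children curing elementarity, shared labelled hybrid hubs at the bottom). But there are genuine gaps. The most serious one concerns the tree-sibling condition: your justification is that each hybrid $h$ ``has the intermediate tree node $t_h$ as sibling,'' but $t_h$ is by your own step 1 the \emph{child} of $h$, not a sibling, so it witnesses nothing. Worse, your construction can actually violate the condition: if all parents of a hybrid $h$ already lie at the common maximal depth (so your step 2 inserts no subdivision nodes on any in-arc of $h$) and each of these parents is non-elementary with only hybrid children (e.g.\ two tree nodes at the same depth whose only children are the same two hybrid nodes), then $h$ ends up with no tree sibling at all. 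The paper avoids this by unconditionally subdividing \emph{every} hybridization arc and then giving each subdivision node an extra tree child, which is precisely the required tree sibling of $h$. You also never arrange tree siblings for the new hybrid hubs themselves; the paper does this by routing one in-arc of each hub through the subdivided parent of the leaf $n$, making $n$ their tree sibling.

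Second, your time-consistency argument does not survive your own step 3. The depth-levelling temporal assignment is valid only for the network obtained after step 2; the hybrid hubs you then introduce have parents sitting at many different depths (one per formerly elementary node), so the rule ``each hybrid $\mapsto$ common depth of its parents'' is no longer well defined and no extension of the depth assignment can satisfy condition (a) at the hubs. Time consistency of the final network must instead be argued via the cycle criterion of Proposition~\ref{prop:tc}, as the paper does, by checking that every reversed hybridization arc leads into a dead-end region whose only reachable tree arcs point to leaves; once that argument is supplied, your depth-equalization machinery is superfluous. Finally, the piece you yourself flag as ``the delicate bookkeeping step''---the design of the marker and the proof that $\widehat{G_1}\cong\widehat{G_2}$ implies $G_1\cong G_2$---is exactly the part the paper discharges by exhibiting an explicit step-by-step inverse construction recovering $N$ from $\overline{N}$; without it the reduction is not established. (Note also that rigidity of the marker is not really the point; what is needed is reconstructibility of $G$ from $\widehat{G}$, which the two distinctly labelled hub leaves already provide.)
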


\begin{proof}
Without any loss of generality, we assume that $S=\{1,\ldots,n\}\subseteq \NN$.

The isomorphism of TSTC-networks on $S$ clearly reduces to the isomorphism of $S$-rDAGs, since the former are a special case of the latter. Let us prove now the converse reduction.

Let $N=(V,E)$ be a $S$-rDAG.
Let $\overline{N}$ be the $(S\cup\{n+1,n+2\})$-rDAG obtained as follows:
\begin{enumerate}[(1)]
\item For every hybrid node $h$ in $N$, remove all arcs from $h$ to its children, and then add a new (tree) node $u_h$, an arc from $h$ to $u_h$, and new arcs from $u_h$ to the children of $h$ in $N$.
If $h$ was a leaf, say with label $k$, then $u_h$ becomes the new leaf labeled with $k$.

\item For every hybridization arc $e=(v,h)$ in the resulting $S$-rDAG, split it into arcs $(v,v_e)$ and $(v_e,h)$, with $v_e$ a new (tree) node.

Let $N'$ denote the resulting $S$-rDAG after these two first steps.

\item For every internal tree node $v$ in $N'$, add a new (tree) node $v'$ and an arc $(v,v')$.

\item Split the arc $(w,n)$ in $N'$ pointing to the leaf $n$ into two arcs $(w,w_{n})$ and $(w_{n},n)$.

\item Add two new nodes $a$ and $b$, and, for every node $v'$ added in step (3),  add arcs $(v',a)$ and $(v',b)$.  Add also arcs $(w_{n},a)$ and $(w_{n},b)$. The nodes $a$ and $b$ will be hybrid.

\item Add a tree leaf children labelled $n+1$ to $a$, and another one labelled $n+2$ to $b$.
\end{enumerate}
An example of this construction is displayed in Fig.~\ref{fig:red3}.

Let us prove now that $\overline{N}$ is a tree-sibling time consistent phylogenetic network.
\begin{itemize}
\item It is rooted (with the same root as $N$) and acyclic, because all new arcs are either used to split arcs in $N$ into pairs of consecutive arcs, or to define paths that end in the new leaves $n+1$ or $n+2$ without forming cycles.

\item It has no elementary nodes. Indeed, the internal tree nodes in $N$ get an extra child in step (3), 
and the tree nodes that are added to $N$ either get an extra child in step (3) or they get two
children in (5).

\item Its hybrid nodes have only one child, and it is a tree node: this is ensured for the hybrid nodes in $N$ in step (1), and for the new hybrid nodes $a$ and $b$ by construction.

\item It is tree-sibling. All hybrid nodes in $N$ get a tree sibling in steps (2) and (3) (for every hybrid node $h$ in $N$, if $e$ is any arc pointing to $h$, then the tree child $v_e'$ of the new node $v_e$ added in the middle of $e$   is such a tree sibling of $h$), and the hybrid nodes $a$ and $b$ have the tree sibling $n$.

\item It is time consistent. To check this, we use Proposition \ref{prop:tc} (and the notations introduced therein). Since we already know that $\overline{N}$  is acyclic, any cycle in ${\overline{N}}^*$ must contain some inverse of a hybridization arc. 
There are two possibilities for this inverse.
If it has the form $(h,x)$, with $h$ one of the new hybrid nodes $a$ or $b$ introduced in step (5) and $x$ one of the tree nodes $v'$ introduced in step (3) or the tree node $w_{n}$ introduced in step (4), then the only tree arcs that can be reached from $x$ in ${\overline{N}}^*$ are those pointing to the leaves $n$, $n+1$ or $n+2$, and therefore no cycle in ${\overline{N}}^*$ contains this arc $(h,x)$ together with a tree arc.
And if this inverse is of the form $(h,v_e)$, with $h$ a hybrid node in $N$ and $v_e$ one of the tree nodes introduced in step (2), then it must be followed in the cycle by the arc $(v_e,v_e')$ added in step (3), and, as we have just said, the only tree arcs that can be reached from $v_e'$ point to a leaf, and hence  
no cycle in ${\overline{N}}^*$ contains this arc $(h,v_e')$ and a tree arc, either.
\end{itemize}

It is clear that the construction of $\overline{N}$ from $N$ adds $O(|V|+|E|)$ nodes and arcs to $N$, and  
thus it is polynomial in the size of $N$.

Now, the $S$-rDAG $N$ can be easily reproduced from $\overline{N}$ by simply undoing its construction:

\begin{enumerate}
\item[(5)] Remove the leaves $n+1$ and $n+2$ and its hybrid parents $a$ and $b$, together with all arcs pointing to them.

\item[(4)] Remove the elementary parent of the leaf $n$ (which will be the remaining leaf with largest label in $S$)
and replace it by an arc from the parent of the removed node to $n$.

\item[(3)] Remove all non-labeled leaves of the resulting rDAG together with the arcs pointing to them

\item[(2)] Remove each parent $v_e$ of every hybrid node, and replace it by an arc from the parent of $v_e$ to the hybrid child of $v_e$.

\item[(1)] Remove the only tree child of each hybrid node, and replace it by an arc from the hybrid node to each one of the children of the removed node.

\item[(0)] The resulting $S$-rDAG is $N$.
\end{enumerate}
It is straightforward to check now that, for every pair of  $S$-rDAGs $N_1$ and $N_2$, 
$N_1\cong N_2$ if, and only if, $\overline{N_1}\cong \overline{N_2}$ as phylogenetic networks over $S\cup\{n+1,n+2\}$. \qed
\end{proof}

\begin{figure}[htb]
\begin{center}
\begin{tikzpicture}[thick,>=stealth,scale=0.4]
\draw(2,0) node[tre] (1) {}; \etq 1
\draw(0,4) node[tre] (u) {};  
\draw(4,4) node[tre] (v) {};  
\draw[->] (u)--(v);
\draw[->] (u)--(1);
\draw[->] (v)--(1);
\draw (0,-3) node {\ };
\end{tikzpicture}
\begin{tikzpicture}[thick,>=stealth]
\draw(0,0) node { };
\draw(0,4) node { };
\draw(0,2) node {$\stackrel{\mbox{\footnotesize (1)}}{\Longrightarrow}$};
\end{tikzpicture}
\begin{tikzpicture}[thick,>=stealth,scale=0.4]
\draw(2,0) node[tre] (z) {}; 
\draw(2,-2) node[tre] (1) {}; \etq 1
\draw(0,4) node[tre] (u) {};  
\draw(4,4) node[tre] (v) {};  
\draw[->] (u)--(v);
\draw[->] (u)--(z);
\draw[->] (v)--(z);
\draw[->] (z)--(1);
\draw (0,-3) node {\ };
\end{tikzpicture}
\begin{tikzpicture}[thick,>=stealth]
\draw(0,0) node { };
\draw(0,4) node { };
\draw(0,2) node {$\stackrel{\mbox{\footnotesize (2)}}{\Longrightarrow}$};
\end{tikzpicture}
\begin{tikzpicture}[thick,>=stealth,scale=0.4]
\draw(2,0) node[tre] (z) {}; 
\draw(2,-2) node[tre] (1) {}; \etq 1
\draw(0,4) node[tre] (u) {};  
\draw(4,4) node[tre] (v) {};  
\draw(1,2) node[tre] (u1) {};  
\draw(3,2) node[tre] (v1) {};  
\draw[->] (u)--(v);
\draw[->] (u)--(u1);
\draw[->] (v)--(v1);
\draw[->] (u1)--(z);
\draw[->] (v1)--(z);
\draw[->] (z)--(1);
\draw (0,-3) node {\ };
\end{tikzpicture}
\begin{tikzpicture}[thick,>=stealth]
\draw(0,0) node { };
\draw(0,4) node { };
\draw(0,2) node {$\stackrel{\mbox{\footnotesize (3)}}{\Longrightarrow}$};
\end{tikzpicture}
\begin{tikzpicture}[thick,>=stealth,scale=0.4]
\draw(2,0) node[tre] (z) {}; 
\draw(2,-2) node[tre] (1) {}; \etq 1
\draw(0,4) node[tre] (u) {};  
\draw(4,4) node[tre] (v) {};  
\draw(1,2) node[tre] (u1) {};  
\draw(3,2) node[tre] (v1) {};  
\draw(-2,-3) node[tre] (u2) {};  
\draw(0,-3) node[tre] (u12) {};  
\draw(4,-3) node[tre] (v12) {};  
\draw(6,-3) node[tre] (v2) {};  
\draw[->] (u)--(v);
\draw[->] (u)--(u1);
\draw[->] (v)--(v1);
\draw[->] (u1)--(z);
\draw[->] (v1)--(z);
\draw[->] (z)--(1);
\draw[->] (u)--(u2);
\draw[->] (v)--(v2);
\draw[->] (u1)--(u12);
\draw[->] (v1)--(v12);
\end{tikzpicture}
\vspace*{2ex}

\begin{tikzpicture}[thick,>=stealth]
\draw(0,0) node { };
\draw(0,4) node { };
\draw(0,2) node {$\stackrel{\mbox{\footnotesize (4)}}{\Longrightarrow}$};
\end{tikzpicture}
\begin{tikzpicture}[thick,>=stealth,scale=0.4]
\draw(2,0) node[tre] (z) {}; 
\draw(2,-1.5) node[tre] (z1) {}; 
\draw(2,-3) node[tre] (1) {}; \etq 1
\draw(0,4) node[tre] (u) {};  
\draw(4,4) node[tre] (v) {};  
\draw(1,2) node[tre] (u1) {};  
\draw(3,2) node[tre] (v1) {};  
\draw(-2,-3) node[tre] (u2) {};  
\draw(0,-3) node[tre] (u12) {};  
\draw(4,-3) node[tre] (v12) {};  
\draw(6,-3) node[tre] (v2) {};  
\draw[->] (u)--(v);
\draw[->] (u)--(u1);
\draw[->] (v)--(v1);
\draw[->] (u1)--(z);
\draw[->] (v1)--(z);
\draw[->] (z)--(z1);
\draw[->] (z1)--(1);
\draw[->] (u)--(u2);
\draw[->] (v)--(v2);
\draw[->] (u1)--(u12);
\draw[->] (v1)--(v12);
\draw (0,-6) node {\ };
\end{tikzpicture}
\begin{tikzpicture}[thick,>=stealth]
\draw(0,0) node { };
\draw(0,4) node { };
\draw(0,2) node {$\stackrel{\mbox{\footnotesize (5) and (6)}}{\Longrightarrow}$};
\end{tikzpicture}
\begin{tikzpicture}[thick,>=stealth,scale=0.4]
\draw(2,0) node[tre] (z) {}; 
\draw(2,-1.5) node[tre] (z1) {}; 
\draw(2,-3) node[tre] (1) {}; \etq 1
\draw(0,4) node[tre] (u) {};  
\draw(4,4) node[tre] (v) {};  
\draw(1,2) node[tre] (u1) {};  
\draw(3,2) node[tre] (v1) {};  
\draw(-2,-3) node[tre] (u2) {};  
\draw(0,-3) node[tre] (u12) {};  
\draw(4,-3) node[tre] (v12) {};  
\draw(6,-3) node[tre] (v2) {};  
\draw (0,-4.5) node[tre] (A) {};
\draw (4,-4.5) node[tre] (B) {};
\draw (0,-6) node[tre] (2) {}; \etq 2
\draw (4,-6) node[tre] (3) {}; \etq 3
\draw[->] (u)--(v);
\draw[->] (u)--(u1);
\draw[->] (v)--(v1);
\draw[->] (u1)--(z);
\draw[->] (v1)--(z);
\draw[->] (z)--(z1);
\draw[->] (z1)--(1);
\draw[->] (u)--(u2);
\draw[->] (v)--(v2);
\draw[->] (u1)--(u12);
\draw[->] (v1)--(v12);
\draw[->] (u2)--(A);
\draw[->] (v2)--(A);
\draw[->] (z1)--(A);
\draw[->] (u12)--(A);
\draw[->] (v12)--(A);
\draw[->] (u2)--(B);
\draw[->] (v2)--(B);
\draw[->] (z1)--(B);
\draw[->] (u12)--(B);
\draw[->] (v12)--(B);
\draw[->] (A)--(2);
\draw[->] (B)--(3);

\end{tikzpicture}

\end{center}
\caption{\label{fig:red3}
An example of the construction involved in the reduction of  the isomorphism of $S$-rDAGs to the isomorphism of TSTC networks.}
\end{figure}

We cannot remove the condition $|S|\geq 3$ in the previous result  because there are only two TSTC phylogenetic networks with less than 3 leaves (up to the actual names of the labels). In particular, this implies that,  in the proof of the previous result,  we cannot  add less than 2 new leaves in the construction of $\overline{N}$ from $N$.

\begin{proposition}
There is only one TSTC phylogenetic network on $\{1\}$, and only one TSTC phylogenetic network on $\{1,2\}$, and in both cases they are trees.
\end{proposition}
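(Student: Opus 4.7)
The plan is to show that under the assumption $|S|\leq 2$ the combination of tree-sibling and time-consistency rules out any hybrid node, so $N$ is a phylogenetic tree; a short induction using the absence of elementary tree nodes then forces $N$ to be the isolated leaf when $|S|=1$ and the cherry with leaves $1,2$ when $|S|=2$. To eliminate hybrids I would fix a temporal assignment $\tau$, suppose that $N$ has a hybrid node, and focus on a hybrid $h$ of maximal $\tau$-value. The unique child $c$ of $h$ is a tree node with $\tau(c)>\tau(h)$, and since $\tau$ is non-decreasing along every arc and strictly increasing along tree arcs, every descendant of $c$ has $\tau>\tau(h)$; by maximality these descendants are all tree nodes. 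Hence the induced subgraph $T_c$ on the descendants of $c$ is a rooted tree at $c$ and, as internal tree nodes have out-degree $\geq 2$, $T_c$ has exactly one leaf if $c$ is already a leaf and at least two otherwise.

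The tree-sibling hypothesis provides a tree sibling $s$ of $h$ sharing some parent $p_i$ of $h$, and the identical argument produces an analogous rooted subtree $T_s$ of descendants of $s$. The key geometric step is that $T_c$ and $T_s$ are node-disjoint: a common node would have a single parent chain going up in $N$ (all nodes on the chain are tree nodes of in-degree $\leq 1$), and that chain would necessarily hit both the root $c$ of $T_c$ and the root $s$ of $T_s$, so one of $c,s$ would be a descendant of the other. If $c\in T_s$, then the unique parent $h$ of $c$ would also lie in $T_s$, contradicting the absence of hybrids there; and if $s\in T_c$, the unique parent $p_i$ of $s$ would lie in $T_c$, forcing the child $h$ of $p_i$ into $T_c$ as well, again a contradiction.

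With disjointness in place, $T_c$ and $T_s$ contribute at least two distinct labelled leaves to $N$, which immediately refutes $|S|=1$. For $|S|=2$ the arithmetic is tight: each of $T_c$ and $T_s$ must have exactly one leaf, so $c$ and $s$ are themselves the two leaves $1,2$ of $N$. To finish I would exploit the fact that $h$ has in-degree $\geq 2$ and pick another parent $p_j\neq p_i$. It must be an internal tree node, since a hybrid parent would have $h$ as its only child but hybrid children must be tree nodes. Any second child $c'$ of $p_j$ then spawns, by the same analysis applied either to $c'$ itself (when it is tree) or to its unique tree child (when it is a hybrid, which inherits the maximal $\tau$-value), a nonempty rooted subtree of descendants that the same disjointness argument keeps apart from both $T_c$ and $T_s$; but then its leaves lie in $\{1,2\}\setminus\{c,s\}=\emptyset$, absurd. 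Hence $p_j$ has no admissible second child and is elementary, a contradiction.

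The only really delicate step is the node-disjointness of $T_c$ and $T_s$; once that is in hand, the rest reduces to leaf-count bookkeeping, repeated once more for the second-parent analysis in the $|S|=2$ case.
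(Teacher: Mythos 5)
Your proof is correct, and it reaches the conclusion by a somewhat different route than the paper. Both arguments share the same toolkit --- a temporal assignment $\tau$, an extremal node, the tree-sibling condition to produce a sibling, and the in-degree $\geq 2$ of a hybrid node to produce a second parent --- but they are organized differently. The paper picks an internal node $v$ maximizing $\tau$ and, among those, maximizing \emph{depth}; this tiebreak forces the relevant children to be literal leaves, so the paper can exhibit three distinct leaves directly by a case analysis on whether $v$ is a tree node or a hybrid node. You instead prove the stronger intermediate claim that a TSTC network with at most two leaves has no hybrid node at all: taking a hybrid $h$ of maximal $\tau$-value, you note that the descendants of its child, of its tree sibling, and of a second child of a second parent all have $\tau$-value exceeding $\tau(h)$ and hence are tree nodes, and you then show the three resulting descendant subtrees are pairwise disjoint via the unique-parent-chain argument, each contributing its own leaf. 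That disjointness lemma is the extra work your route requires (the paper's depth tiebreak makes distinctness of the exhibited leaves automatic), but in exchange you obtain a cleaner structural conclusion --- the network is a tree --- after which classifying elementary-free rooted trees with at most two leaves is immediate. Your handling of the delicate points is sound: the identification of the unique parents ($h$ of $c$, $p_i$ of $s$, $p_j$ of $c'$) correctly turns a common node of two subtrees into a hybrid node sitting among tree-only descendants, and the case where the second parent's second child is itself a hybrid of maximal $\tau$-value is treated properly by passing to its tree child. The only part left implicit is the final induction on elementary-free trees, which is routine.
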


\begin{proof}
The $\{1\}$-rDAG consisting of a single node, labeled 1, and the $\{1,2\}$-rDAG consisting of the phylogenetic tree with Newick code \texttt{(1,2);} are clearly TSTC phylogenetic networks. Let us check now that any other TSTC phylogenetic network has at least 3 leaves.

Let $N=(V,E)$ be a TSTC phylogenetic network other than those described in the last paragraph, let 
$\tau: V\to \NN$ be a time assignment, and let $v$ be an internal node with largest $\tau$-value and, among those with this largest time assignment, of largest depth.

If $v$ is a tree node, then all its children are either leaves or
hybrid nodes with leaf children (because any tree descendant node of $v$ has time assignment larger than $\tau(v)$). And $v$'s  hybrid children would have the same time assignment as $v$, but depth largest than $v$'s depth, against the assumption. Therefore all children of $v$ are leaves, and it has at least 2 children, because it cannot be elementary. Now, if $v$ has more than 2 children, we are done, while if it has only two children, say the leaves 1 and 2, then 
$v$ will have a parent in $N$  (because $N$ is not the tree \texttt{(1,2);}). If the parent of $v$ is a tree node, let $w$ be this node, and let $z$ be another child of $w$. Since $N$ does not contain cycles, and any path to $1$ or $2$ must contain $w$, we deduce that any descendant leaf of $z$ must be different from 1 or 2: this gives at least 3 leaves.
If, on the contrary,  the parent of $v$ is a hybrid node $x$, let $w$ be the parent of $x$ that has a tree child, say $z$. The time consistency prevents $x$ to be a descendant of $z$ (because $\tau(z)>\tau(w)=\tau(x)$)  and therefore, since any path leading to 1 or 2 must contain $x$,  any leaf that is a descendant of $z$ will be different from $1,2$: this gives again at least 3 leaves.

If $v$ is a hybrid node, then its child is a leaf, say 1. Let $v_1$ be a parent of $v$ that has a tree child.
Since $\tau(v_1)=\tau(v)$ is the largest $\tau$ value of an internal node of $N$, this tree child must be a leaf, say 2.
Now let $v_2$ be another parent of $v$. Since it is a tree node, it must have another child other than $v$, say $x$. If $x$ is a tree node, it is a leaf, as we have just seen. If $x$ is hybrid, then since $\tau(x)=\tau(v_2)=\tau(v)$, the tree child of $x$ must be a leaf. In both cases, we obtain a leaf that is different from 1 and 2, that is, $N$ contains at least 3 leaves.
\qed
\end{proof}

\section{Conclusion}
We have proved that, unless the graph isomorphism problem belongs to P, there is no hope of defining a polynomially computable metric on the class of all TSTC phylogenetic networks on a set $S$ of at least 3 taxa.  It remains open the problem of defining  polynomially computable, and  biologically sound, metrics on the class of all TSTC phylogenetic networks on a given set $S$ with all their hybrid nodes with in-degree bounded by some $d\in \NN$. When $d=2$, the $\mu$-distance is such a metric \cite{cardona.ea:sbtstc:2008}, but it is no longer a metric for $d=4$ (see 
 the Supplementary Material to the aforementioned paper). Actually, we do not even know whether the isomorphism problem for TSTC phylogenetic networks on a given set $S$ of taxa with 
 globally bounded in-degree hybrid nodes (but without bounding the out-degree of the tree nodes; otherwise, Luks' theorem \cite{luks82} would apply) is in P, but we conjecture that this is the case.

\section*{Acknowledgements}
The authors would like to thank Antoni Lozano for his comments on an early version of this manuscript. 
The research described in this paper has been partially supported by
the  Spanish
DGI project MTM2006-07773 COMGRIO.

\end{document}